\documentclass[journal]{IEEEtran}
\ifCLASSINFOpdf
\else
\fi
\usepackage{graphicx}%
\usepackage{multirow}%
\usepackage{amsmath,amssymb,amsfonts}%
\usepackage{amsthm}%
\usepackage{mathrsfs}%
\usepackage{xcolor}%
\usepackage{textcomp}%
\usepackage{manyfoot}%
\usepackage{booktabs}%

\usepackage{listings}%
\usepackage{xcolor}
\usepackage{booktabs}%
\usepackage{array}
\usepackage{rotating}
\usepackage{adjustbox}
\usepackage{pdflscape}  
\usepackage{longtable}  
\usepackage{booktabs} 
\usepackage{geometry} 
\usepackage{url}
\usepackage{subcaption}

\newtheorem{theorem}{Theorem}

\theoremstyle{definition}

\theoremstyle{remark}

\usepackage{algorithm}%
\usepackage{algorithmicx}%
\usepackage{algpseudocode}%

\begin{document}
%
\title{Efficient Shapley values computation for Boolean network models of gene regulation}
%
%
%

\author{Giang~Pham \IEEEauthorrefmark{1}, 
        Silvia~Giulia~Galfrè \IEEEauthorrefmark{1} 
        and~Paolo~Milazzo \IEEEauthorrefmark{1} 
\thanks{This work was supported by the project "Hub multidisciplinare e interregionale di ricerca e sperimentazione clinica per il contrasto alle pandemie e all’antibioticoresistenza (PAN-HUB)” funded by the Italian Ministry of Health (POS 2014-2020, project ID: T4-AN-07, CUP: I53C22001300001) and from a 2023 NARSAD Young Investigator Grant from the Brain \& Behavior Research Foundation.}
\thanks{\IEEEauthorrefmark{1}Department of Computer Science, University of Pisa, Largo B. Pontecorvo 3, 56127 Pisa, Italy.}
}

\maketitle

\begin{abstract}
Identifying dynamically influential nodes in biological networks is a central problem in systems biology, particularly for prioritizing intervention targets in gene regulatory networks. In this paper, we propose a Shapley-value-based framework for assessing the importance of nodes in a Boolean network with respect to a given target node. The framework comprises two complementary measures: the \emph{Knock-out} and the \emph{Knock-in Shapley values}. Moreover, we present a propagation-based method that enables their efficient computation. By exploiting the logical structure of the network, the method avoids exhaustive simulations. The approach is exact for acyclic networks and provides good approximations for cyclic networks. Evaluation on benchmark models from the Cell Collective database shows that the propagation method accurately recovers node importance rankings while achieving substantial speed-ups. 
\end{abstract}

\begin{IEEEkeywords}
Boolean networks, node importance, Knock-out, Shapley value, network propagation.
\end{IEEEkeywords}

%
\IEEEpeerreviewmaketitle

\section{Introduction}
Boolean networks (BNs), introduced by Stuart Kauffman~\cite{kauffman1969metabolic}, have become a widely adopted framework for modeling gene regulatory networks (GRNs). Their success is largely due to their conceptual simplicity and their ability to qualitatively capture regulatory interactions among genes, while still preserving essential dynamic features. As a result, BNs have been successfully applied to a broad range of biological systems and processes~\cite{ALBERT20031,Fang2004yeast,Rodriguez2007tcell}. A central application of BN modeling is the identification of potential intervention targets \cite{Weidner2024gatekeep}, which naturally leads to the problem of quantifying the importance or influence of individual nodes.

Early efforts in this direction focused on characterizing node influence at the level of Boolean functions. Shmulevich and Kauffman introduced the notion of \textit{activity} \cite{shmulevich2004activities}, which measures the influence of a variable on the output of a Boolean function. For a Boolean function $f(\mathbf{x})$ with $\mathbf{x} = (x_1, x_2, \dots, x_k)$, the \textit{activity} of a variable $x_j$ is defined as the average of the partial derivative of $f$ with respect to $x_j$ over all possible configurations $\mathbf{x} \in \{0,1\}^k$. This concept was further analyzed in the context of canalizing functions, and it was shown to have a direct impact on the dynamic behavior of BNs. While foundational, this line of work primarily emphasizes mathematical properties of Boolean functions rather than applications to large-scale biological networks. This limitation was addressed by extending the notion of \textit{activity} to quantify node influence in BN dynamics~\cite{ghanbarnejad2012impact}. The study defined the \emph{dynamical impact} (DI) of a node as the fraction of all possible states for which a perturbation of that node propagates through the network within a given time. This DI can be approximated by the principal eigenvector of the activity matrix, where each entry $\alpha_{ij}$ quantifies the influence of node $i$ on node $j$ (i.e., the \textit{activity} of $i$ on the Boolean function $f_j$). 

Subsequent studies shifted toward information-theoretic approach. Heckel \textit{et al.} introduced the concept of \textit{determinative power} (DP) for an input node \cite{Heckel2013}, defined via mutual information between the state of that input and the states of all non-input nodes in the network.
However, this framework was limited to networks with tree-like topologies, assumed the absence of input dependencies and feedback loops. It also restricts DP to input nodes only. These constraints were later relaxed by Matache \textit{et al.}, who extended the notion of DP to general nodes by relaxing the assumption of independence among the inputs of a Boolean function. This allows the method to work with general BNs \cite{Matache2016}. 
Importantly, DP was shown to recover biologically essential nodes~\cite{pentzien2018identification}. This notion is a theoretically grounded, information-based measure, but it requires either the assumption on the independence among input nodes of Boolean functions as in the work of Heckel \textit{et al.} or the extensive information about the probability of each state as in the relaxed version of Matache \textit{et al.} In addition, it is a generic network measure and is not tailored to specific phenotypic markers. 


Despite their differences, these approaches share a common strategy: identifying influential nodes in Boolean networks while avoiding costly simulations through mathematical abstractions that assume product distributions or equal state probabilities. In contrast, we introduced the \textit{Knock-out Shapley value}~\cite{pham2023preliminary,pham2024gene}, which measures node importance directly from systematic knock-out simulations. This approach is intuitive and biologically relevant, as shown on CD4$^+$ T cell models~\cite{pham2024gene}. In this paper, we propose a general Shapley-value-based framework with two measures: the \emph{Knock-out Shapley value} and the \emph{Knock-in Shapley value}. These measures systematically evaluate gene influence in GRNs across different environmental conditions by quantifying each gene’s marginal contribution to a target node over all input configurations. Averaging these contributions provides a principled measure of gene importance for a specific phenotypic outcome.



Although these notions are biologically meaningful, their standard calculation is computationally expensive since both the number and complexity of simulations grow rapidly with network size~\cite{pham2024gene}. Hence, to overcome this limitations, we also introduce a propagation-based method that enables the efficient computation without performing the knock-out/knock-in simulations. The propagation-based method exploits the logical structure of BNs to enable efficient and scalable estimation, thereby making large-scale screening analyses feasible. It successfully recovers node rankings in most benchmark models, achieving average normalized discounted cumulative gain (NDCG) of 0.779 for knock-in and 0.865 for knock-out Shapley value. Moreover, the method improves computational efficiency, reducing the time complexity from $O(n^2 2^n)$ to $O(n 2^n)$. An implementation of the proposed method, together with scripts to reproduce the experiments, is publicly available at \url{https://github.com/giangpth/ShapleyValueForBooleanNetwork.git}.



The remainder of this paper is organized as follows. Section~\ref{sec:background} reviews the Shapley value and BNs. Section~\ref{sec:metho} introduces the \emph{Knock-out, Knock-in Shapley value} and the propagation-based computation method. Section~\ref{sec:eval} evaluates the method and compares it with existing approaches through two case studies. Finally, Section~\ref{sec:conclusion} concludes the paper and discusses future work.

\section{Background}
\label{sec:background}

\subsection{Shapley Value}
The Shapley value is a concept from cooperative game theory that assigns a fair contribution score to each player based on its marginal effect across all coalitions it might participate in. It has been employed as a tool to analyze a wide range of biological networks~\cite{pham2025comprehensive}.
A cooperative game is defined by a pair
$\langle N, v \rangle$, where $N$ is a set of players and
$v : 2^N \rightarrow \mathbb{R}$ is a payoff function\footnote{The variable $N$ introduced in this section is distinct from those defined in Section \ref{sec:BN}. We follow the convention of using notation that is standard in the literature on Shapley values.}.

For a player $i \in N$, the Shapley value is defined as
\begin{equation}
\footnotesize
\phi_i(v) =
\sum_{S \subseteq N \setminus \{i\}}
\frac{|S|!\, (|N| - |S| - 1)!}{|N|!}
\left[ v(S \cup \{i\}) - v(S) \right]
\end{equation}

Intuitively, $\phi_i(v)$ represents the average marginal contribution of player $i$ over all possible ways of forming a coalition from the remaining players. See \cite{pham2025comprehensive} for more details and examples.


\subsection{Boolean Networks with input nodes}
\label{sec:BN}
A BN is defined as $B = \langle N, F \rangle$, where $N = \{x_1, \dots, x_n\}$ is a set of Boolean variables $x_i$ taking values in $\{0, 1\}$, and $F = \{f_{m+1}, \dots, f_n\}$ with $n > m \geq 0$, is a set of Boolean update functions. BNs can be depicted as directed graphs, with nodes representing Boolean variables and edges representing regulatory interactions. In this paper, the terms variable, node and gene are used interchangeably. We consider BNs with input nodes, denoted $B_{n,m}$ ($m > 0$), where the first $m$ variables $x_1, \dots, x_m$ are \emph{input nodes} without update functions, and the remaining $n-m$ variables are \emph{internal nodes} updated with the corresponding function: 
\begin{equation}
\begin{split}
x_i(t+1) &= f_i\big(x_{j_1(i)}(t), \dots, x_{j_{k_i}(i)}(t)\big), \\
\quad i &= m+1, \dots, n .
\end{split}
\end{equation}

Input variables remain fixed during simulation, while internal variables evolve based on the set of Boolean functions. Different update schemes can be employed, including synchronous and asynchronous scheme~\cite{barbuti2020survey,schwab2020concepts}. In the synchronous scheme, all Boolean functions are applied simultaneously at each time step to update the variables.  In the asynchronous scheme, a single Boolean variable is non-deterministically selected and updated at each time step according to its corresponding function. In this study, we apply the synchronous updating scheme due to its deterministic property \cite{barbuti2020survey,schwab2020concepts}. 

The simulation of a BN produces a state graph, where nodes represent the reachable states of the BN, and edges denote state transitions governed by the update functions. \textit{Attractors} correspond to loops within the state graph, capturing the network's long-term behavior, and are often associated with phenotypes. An attractor can be represented as a set of states within the loop, \( A = \{ a_{j+1},a_{j+2}, \dots, a_{j+k} \} \), where the special case \( k = 1 \) corresponds to a singleton attractor, consisting of a single state $a_{j+1}$. 

An \emph{input configuration} $S \subseteq \{x_1, \dots, x_m\}$ is defined
by setting variables in $S$ to $1$ and all remaining input variables to $0$. For each input configuration, the network is initialized with all
internal variables set to $0$ and simulated synchronously until an attractor is reached. For a given input configuration $S$, we denote by $A_{B_{n,m}}(S) = \{a^S_{j+1}, a^S_{j+2}, \dots a^S_{j+k}\}$ the corresponding attractor. For periodic attractors, we summarize the attractor by the maximum value attained by each node: $\mathbf{a}^S = \bigvee_{a^S_i \in A_{B_{n,m}}(S)} a^S_i$. This representation indicates whether a gene is expressed (at some level) within the stable state of the system.

\section{Methodology}
\label{sec:metho}
\subsection{Shapley Values in Boolean Networks}
\label{sec:kogame}
In Boolean networks, Shapley values can be used to quantify the importance of nodes in activating a designated \emph{target node}. The payoff function is defined on input configurations and determined by the value of the target node in the resulting attractor.

\paragraph*{The Knock-out game}
A preliminary version of the \emph{Knock-out game}  was introduced by our earlier works \cite{pham2023preliminary, pham2024gene}. The method evaluates how disabling individual nodes affects the activation of the target under different input configurations.

A Knock-out game is defined as a pair \( \langle \mathbb{B}_{n,m}, \texttt{T} \rangle \), where \( \mathbb{B}_{n,m} \) is a BN and \( \texttt{T} \) is an internal target node. The \textit{payoff function} \( v \) is determined by the value of \( \texttt{T} \) in the attractor reached from each input configuration \( S \), and is defined as
\begin{equation}
\label{eq:payoff}
v^{\texttt{T}}_{\mathbb{B}_{n,m}}(S)
= \mathbf{a}^S(\texttt{T}) = \bigvee_{a^S_i \in  A_{\mathbb{B}_{n,m}}(S)} a^S_i(\texttt{T})
\end{equation}

\paragraph*{The Knock-out Shapley value}
For a node \( \texttt{X} \) in $\mathbb{B}_{n,m}$, let \( \mathbb{B}_{n,m}^{\bar{\texttt{X}}} \) denote the modified network in which \( \texttt{X} \) is permanently fixed to \(0\). The Knock-out Shapley value of \( \texttt{X} \) is defined as
\begin{equation}
\label{eq:shapko}
\phi^{ko}_{\texttt{X}}(v)
= \sum_{S \subseteq M} w_{S}
\bigl[
v^{\texttt{T}}_{\mathbb{B}_{n,m}}(S)
-
v^{\texttt{T}}_{\mathbb{B}_{n,m}^{\bar{\texttt{X}}}}(S)
\bigr],
\end{equation}
where \( M = \{x_1, x_2, \dots, x_m\} \) denotes the set of input nodes and $w_S$ denotes the weight associated with input configuration $S$. Following the original Shapley value formulation, we use
$w_S = \frac{|S|! (|M| - |S|)!}{|M|!}$.
Alternative weighting schemes, such as uniform weights $w_S = k$, or task-specific designs that emphasize coalitions containing particularly relevant inputs, can also be adopted. This flexibility allows the method to be tailored to different goals.


Similar to the Knock-out game, the \textit{Knock-in game} is defined by a pair \( \langle \mathbb{B}_{n,m}, \texttt{T} \rangle \), where \( \mathbb{B}_{n,m} \) is a Boolean network and \( \texttt{T} \) is an internal target node. The payoff function is defined similarly to that of the Knock-out game and is determined by the value of \( \texttt{T} \) in the attractor reached from each input configuration.

For a node \( \texttt{X} \in \mathbb{B}_{n,m} \), let \( \mathbb{B}_{n,m}^{\mathring{\texttt{X}}} \) denote the modified network in which \( \texttt{X} \) is permanently fixed to \(1\). The knock-in Shapley value of \( \texttt{X} \) is defined as
\begin{equation}
\label{eq:shapki}
\phi^{ki}_{\texttt{X}}(v)
= \sum_{S \subseteq M} w_{S}
\bigl[
v^{\texttt{T}}_{\mathbb{B}_{n,m}}(S)
-
v^{\texttt{T}}_{\mathbb{B}_{n,m}^{\mathring{\texttt{X}}}}(S)
\bigr]
\end{equation}


\subsection{Propagation method for the Knock-out and Knock-in Shapley values computation}
\label{sec:rules}
The propagation method we are going to define assumes to be applied to BNs with a special structure, called Binarized Boolean Networks (BBNs). In a BBN, each node has at most two incoming edges (i.e., each Boolean function is defined on at most two Boolean variables) in which, negation regulation has exactly one input node. In this section, we present the theoretical framework for propagation in such networks. The procedure for transforming a general BN into a BBN is described in Section~\ref{sec:implement}.

\subsubsection{Simulation result as a truth table}
The simulation of a BN can be represented as a truth table, where the number of rows corresponds to the number of input configurations. Each row of the truth table is assigned a weight corresponding to the associated term in the summations of Equations~\ref{eq:shapko} and~\ref{eq:shapki}. 
For instance, for a Boolean network with three input nodes, as illustrated in Figure~\ref{fig:simnet}, there are $2^3$ input configurations, corresponding to 8 rows in Table~\ref{tab:simnet}. Each row represents the singleton attractor or the overall state $\mathbf{a}_S$ of the periodic attractor obtained from the corresponding input configuration as defined in Section \ref{sec:background}.
This representation enables direct and convenient computation of the \textit{payoff function} defined in Equation~\ref{eq:payoff}.

From Equations \ref{eq:shapko} and \ref{eq:shapki}, the Knock-out and Knock-in Shapley values can be interpreted as the sum of the weighted differences between the value of the target node $\texttt{T}$ under two scenarios: the intact network $\mathbb{B}_{n,m}$ and the network $\mathbb{B}_{n,m}^{\bar{\texttt{X}}}$ where $\texttt{X}$ is fixed to 0 (Knock-out) or $\mathbb{B}_{n,m}^{\mathring{\texttt{X}}}$ where $\texttt{X}$ is fixed to 1 (Knock-in), evaluated across all input configurations. This difference is nonzero only for those input configurations $S$ in which fixing the value of $\texttt{X}$ alters the value of the target node $\texttt{T}$ relative to the intact network. For all other configurations, the difference is zero and does not contribute to the Shapley value. 
With this formulation, the computation of the Knock-out and Knock-in Shapley values reduces to identifying the rows for which
$
v_{\mathbb{B}_{n,m}}^{T}(S) - v_{\mathbb{B}_{n,m}^{\bar{X}}}^{T}(S) \neq 0
$
in the Knock-out, and
$
v_{\mathbb{B}_{n,m}}^{T}(S) - v_{\mathbb{B}_{n,m}^{\mathring{X}}}^{T}(S) \neq 0
$
in the Knock-in.

\begin{figure}[ht]
\centering
\begin{minipage}{0.30\textwidth}
    \centering
    \includegraphics[width=0.6\linewidth]{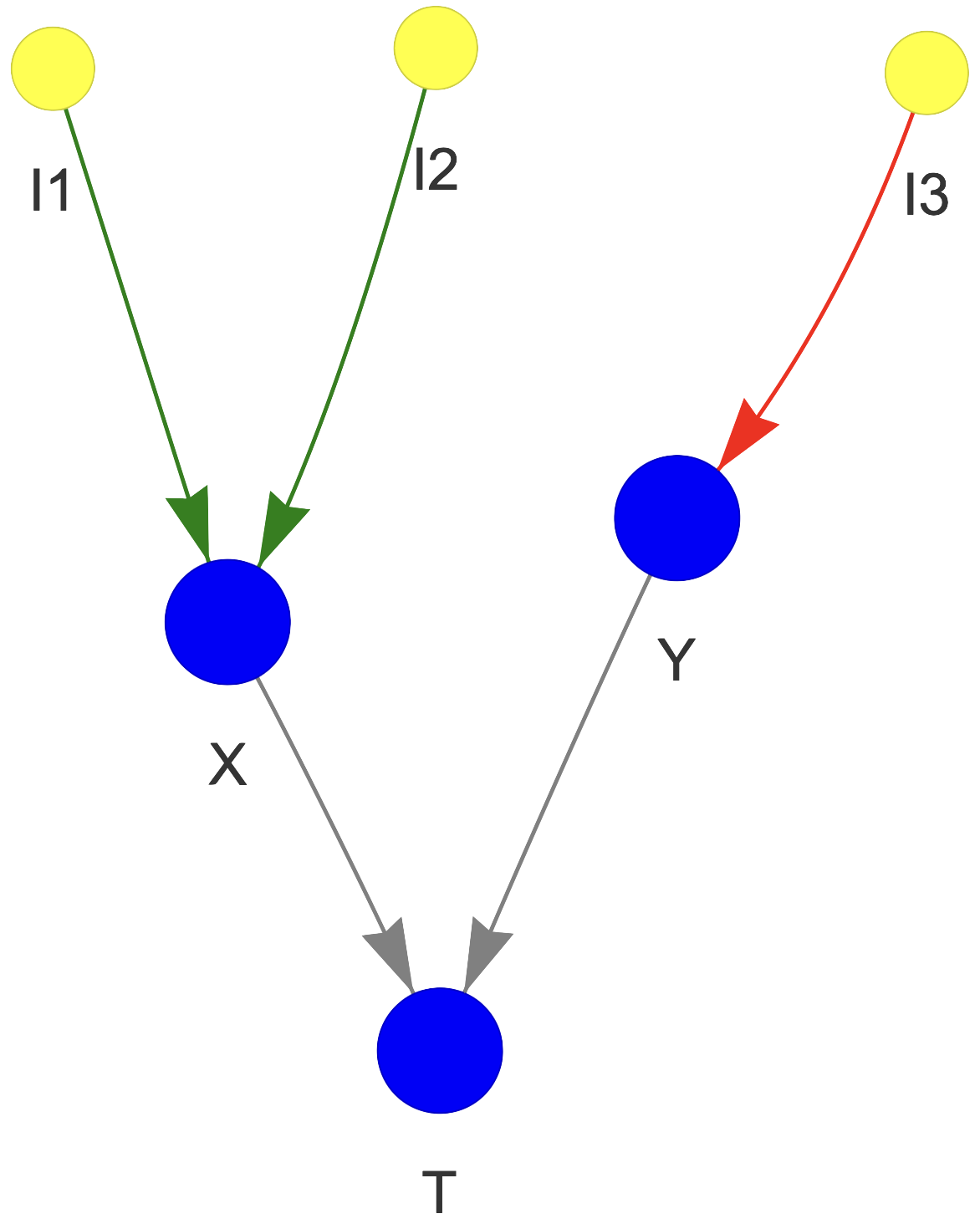}
    \end{minipage}
    \begin{minipage}{0.17\textwidth}
    \small
    \texttt{X = I1 AND I2}\\
    \texttt{Y = NOT I3  }\\
    \texttt{T = X OR Y}
    \end{minipage}
    \caption{A Binary Boolean Network with the corresponding Boolean functions} 
     \label{fig:simnet}
\end{figure}

\begin{table}[]
    \centering
    \begin{tabular}{|c||ccc|ccc|}
        \hline
        Weight &I1 & I2 & I3 & X & Y & T \\
        \hline
         $w_1= 1$ & 0 & 0 & 0 & 0 & 1 & 1 \\
         $w_2 = 1/3$ & 0 & 0 & 1 & 0 & 0 & 0 \\
         $w_3 = 1/3$ & 0 & 1 & 0 & 0 & 1 & 1 \\
         $w_4 = 1/3$ & 0 & 1 & 1 & 0 & 0 & 0 \\
         $w_5 = 1/3 $ & 1 & 0 & 0 & 0 & 1 & 1 \\
         $w_6 = 1/3 $ & 1 & 0 & 1 & 0 & 0 & 0 \\
         $w_7 = 1/3$ & 1 & 1 & 0 & 1 & 1 & 1 \\
         $w_8 = 1$ & 1 & 1 & 1 & 1 & 0 & 1 \\
        \hline
    \end{tabular}
    \caption{Truth table of the network in Figure \ref{fig:simnet}.}
    \label{tab:simnet}
\end{table}

For rows in which the value of \(\texttt{X}\) is equal to $1$ in the truth table, fixing \(\texttt{X}\) to $1$ leaves the network unchanged, and therefore
$
v_{\mathbb{B}_{n,m}}^{\texttt{T}}(S) - v_{\mathbb{B}_{n,m}^{\mathring{\texttt{X}}}}^{\texttt{T}}(S) = 0 .
$
As a result, these rows do not contribute to the Knock-in Shapley value, but they may contribute to the Knock-out Shapley value when \(\texttt{X}\) is fixed to $0$. Conversely, for rows in which \(\texttt{X}\) is equal to $0$, contributions arise for the Knock-in value but not for the Knock-out value. 
The Shapley values can then be computed by summing the signed weights associated with the corresponding sets of rows. For a node \texttt{X}, if its value in row $w_i$ is $1$, the corresponding weight contributes to the Knock-out value; if it is $0$, it contributes to the Knock-in value. The sign of each weight is determined by the change in the value of the target node: if the target is originally $1$ and the modification of \texttt{X} causes it to switch to $0$, the weight is assigned a positive sign, and conversely, a negative sign is assigned when the modification induces a transition from $0$ to $1$.


\subsubsection{Propagation for Boolean operators}
Since computing the Knock-in and Knock-out Shapley values reduces to identifying rows where flipping a node changes the target node, the procedure starts at the target node (\texttt{T}), where all rows are relevant as any change in its value trivially affects its own state. These rows are then propagated upward through the BBN to the input nodes. Because each node in a BBN depends on at most two parent nodes, the propagation rules can be defined simply and systematically.
\paragraph{\texttt{OR} operator: $ \texttt{A} = \texttt{B} \texttt{ OR } \texttt{C}$}
For binary operator \texttt{OR} of the form $ \texttt{A} = \texttt{B} \texttt{ OR } \texttt{C}$, changing the value of \( \texttt{B}\) affects the value of \(\texttt{A}\) only when \(\texttt{C} = 0\), since \(\texttt{A} = 1\) whenever \(\texttt{C} = 1\), regardless of the value of \(\texttt{B}\). Consequently, among the rows associated with node \(\texttt{A}\), only those rows for which \(\texttt{C} = 0\) are propagated to node \(\texttt{B}\), and analogously, rows with \(\texttt{B} = 0\) are propagated to node \(\texttt{C}\). For example, for the \texttt{OR} operator
\(\texttt{T = X  OR  Y}\) in Figure \ref{fig:simnet}, with \(\texttt{T}\) as the target node, all rows are initially associated with \(\texttt{T}\), corresponding to the weight set \(\{ w_1, w_2, w_3, w_4, w_5, w_6, w_7, w_8 \}\). The subset of rows propagated to node \(\texttt{X}\) from \(\texttt{T}\) is \(\{ w_2, w_4, w_6, w_8 \}\), corresponding to configurations in which \(\texttt{Y} = 0\). Analogously, the subset of rows propagated to node \(\texttt{Y}\) is \(\{ w_1, w_2, w_3, w_4, w_5, w_6 \}\), corresponding to configurations in which \(\texttt{X} = 0\). 

\paragraph{\texttt{AND} operator: $\texttt{A = B  AND  C}$} 
For a binary  operator \texttt{AND} of the form $\texttt{A = B  AND  C}$, the rows propagated from \(\texttt{A}\) to node \(\texttt{B}\) are those for which \(\texttt{C} = 1\), since only in this case, changes in \(\texttt{B}\) can influence the value of \(\texttt{A}\). An analogous rule applies to the rows propagated from \(\texttt{A}\) to node \(\texttt{C}\). Consider the network shown in Figure~\ref{fig:simnet} and the operator \(\texttt{X = I1  AND  I2}\). If the set of rows associated with node \(\texttt{X}\) is \(\{ w_2, w_4, w_6, w_8 \}\), then the subset of rows propagated to node \(\texttt{I1}\) is \(\{ w_4, w_8 \}\), corresponding to
configurations in which \(\texttt{I2} = 1\), while the subset of rows propagated to node \(\texttt{I2}\) is \(\{ w_6, w_8 \}\), in which \(\texttt{I1} = 1\).

\paragraph{Identity operator: $\texttt{A = B}$; negation operator: $\texttt{A =  NOT B}$ } 
In the case of an identity operator, the set of rows associated with \(\texttt{B}\) is identical to that associated with \(\texttt{A}\). This follows directly from the fact that any change in \(\texttt{B}\) induces a corresponding
change in \(\texttt{A}\), and thus \(\texttt{A}\) propagates the effect of \(\texttt{B}\) to the target node \(\texttt{T}\). Similarly, for the negation operator \(\texttt{A = \texttt{NOT} B}\), node \(\texttt{B}\) also inherits the full set of rows associated with \(\texttt{A}\). The difference lies in the roles of these rows: rows contributing to the knock-out value of \(\texttt{A}\) contribute to the knock-in value of \(\texttt{B}\), and vice versa. In the network of Figure~\ref{fig:simnet}, consider the relation \(\texttt{Y = \texttt{NOT} I3}\). In this case, node \(\texttt{I3}\) inherits the full set of rows associated with node \(\texttt{Y}\), namely \(\{ w_1, w_2, w_3, w_4, w_5, w_6 \}\). However, rows contributing to the Knock-out value of \(\texttt{Y}\), given by \(\{ w_1, w_3, w_5 \}\) (for which \(\texttt{Y} = 1\)), contribute instead to the Knock-in value of \(\texttt{I3}\) (for which \(\texttt{I3} = 0\)). The remaining rows, \(\{ w_2, w_4, w_6 \}\), correspondingly contribute to the Knock-out value of \(\texttt{I3}\).


For the network in Figure~\ref{fig:simnet}, the results are: for \texttt{I1}, $\mathrm{KO}=1.0$ and $\mathrm{KI}=-0.3333$; for \texttt{I2}, $\mathrm{KO}=1.0$ and $\mathrm{KI}=-0.3333$; for \texttt{I3}, $\mathrm{KO}=-1.0$ and $\mathrm{KI}=1.6667$; for \texttt{T}, $\mathrm{KO}=3.0$ and $\mathrm{KI}=-1.0$; for \texttt{X}, $\mathrm{KO}=1.0$ and $\mathrm{KI}=-1.0$; and for \texttt{Y}, $\mathrm{KO}=1.6667$ and $\mathrm{KI}=-1.0$. These results are consistent with those obtained through direct simulation.

\subsubsection{Non-trivial structures}
A BBN may contain non-trivial structures that are not captured by the proposed rules. In particular, they are diamond-shaped convergences and cycles, in which cycles introduce approximation in the propagation results.

\paragraph{Diamond structures}
Beyond simple unary and binary operator structures, a BBN may also contain diamond-shaped motifs, as illustrated in Figure~\ref{fig:diamond} (with $\texttt{I}$ is the target node), in which a node \(\texttt{B}\) influences a downstream node \(\texttt{G}\) through two distinct incoming paths via nodes \(\texttt{D}\) and \(\texttt{E}\). In this case, the set of rows propagated to \(\texttt{B} \) cannot be obtained simply by joining the set of rows propagated from \(\texttt{D}\) and \(\texttt{E}\).
Indeed, the set of rows associated with node \(\texttt{B}\) must first be a subset of the rows associated with node \(\texttt{G}\), since \(\texttt{B}\) influences the target node \(\texttt{I}\) through \(\texttt{G}\). However, because the value of \(\texttt{G}\) depends on both \(\texttt{B}\) (via nodes \(\texttt{D}\) and \(\texttt{E}\)) and \(\texttt{A}\) (via node \(\texttt{D}\)), not all rows associated with \(\texttt{G}\) contribute to the influence of \(\texttt{B}\). To address this case, a straightforward approach is to partially simulate the diamond structure, which in this example involves the four nodes  \(\texttt{B}\), \(\texttt{D}\), \(\texttt{E}\), and \(\texttt{G}\). To identify the rows associated with \(\texttt{B}\), we simulate flipping the value of \(\texttt{B}\) in each row associated with \(\texttt{G}\) and retain only those rows for which this perturbation leads to a change in \(\texttt{G}\). 

In a general BBN, diamond structures can be non-trivial because nodes involved in cycles may induce self-influencing diamonds, as illustrated in Figure~\ref{fig:diamondcycle}. In this example, in addition to the diamond formed by \texttt{A}, \texttt{B}, \texttt{C}, and \texttt{D}, there is another diamond originating from \texttt{D}, passing through \texttt{A}, \texttt{B} in one side, and \texttt{C} in the other side, and feeding back into \texttt{D}. Such configurations significantly complicate the treatment of diamonds. Thus, we restrict the diamonds to acyclic BBNs. More precisely, we identify diamond structures on an acyclic approximation of the original network obtained by removing all cycles. The details of this procedure are discussed in Section~\ref{sec:implement}. Removing cycles prior to diamond identification may cause some diamond structures to be missed, constituting the first source of approximation in our propagation method.

\begin{figure}
    \centering
    \begin{subfigure}[t]{0.235\textwidth}
        \centering
         \includegraphics[width=0.9\linewidth]{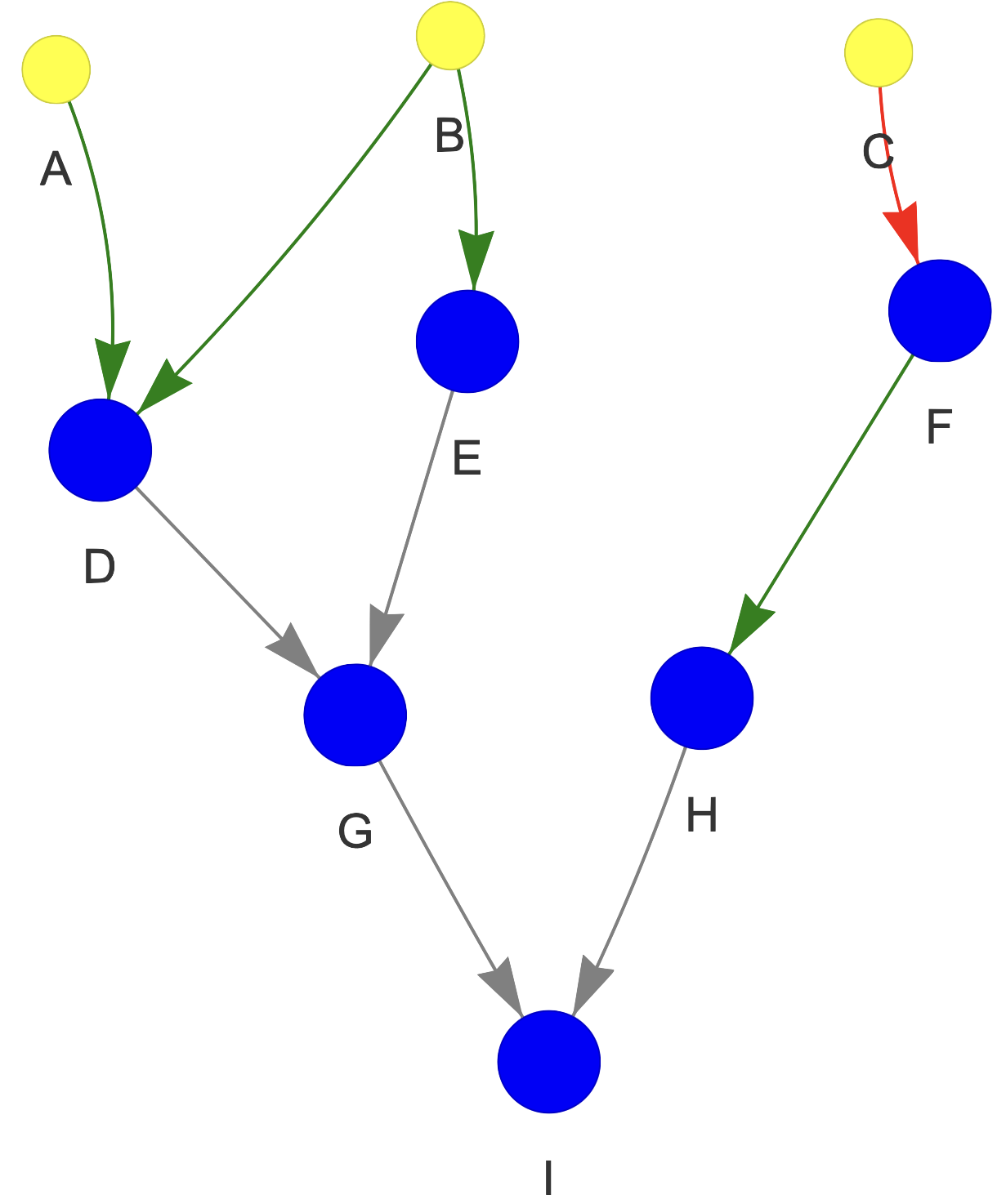}
    \caption{An acyclic BBN with diamond structure including \texttt{B, D, E} and \texttt{G}}
    \label{fig:diamond}
    \end{subfigure}
    \hfill
    \begin{subfigure}[t]{0.235\textwidth}
        \centering
        \includegraphics[width=0.76\linewidth]{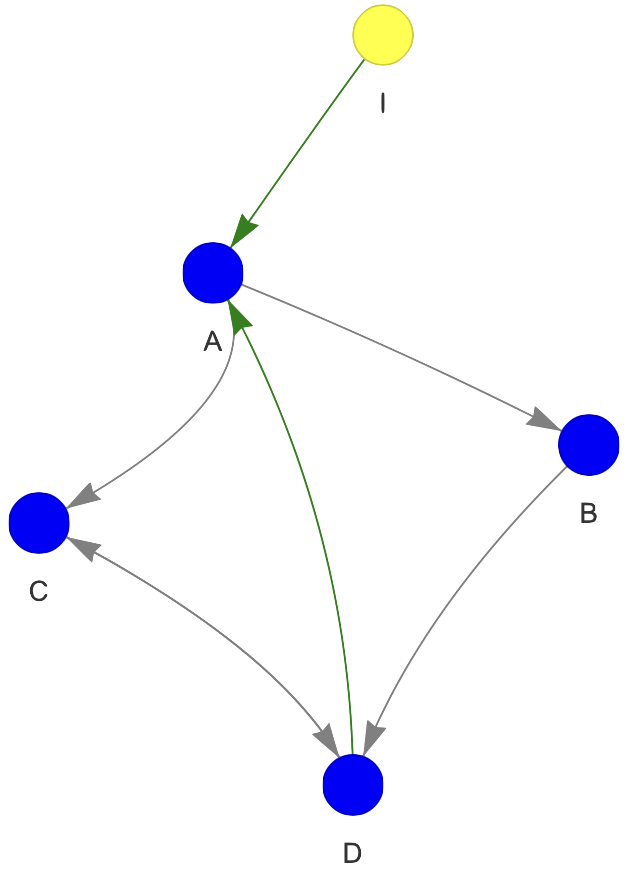}
    \caption{A cyclic BBN with diamond structures.}
    \label{fig:diamondcycle}
    \end{subfigure}
\end{figure}

\paragraph{Cycles}
The rules defined for binary operators assume independence among the operands and the result node. However, this assumption may be violated in networks with cycles. For instance, in the BBN shown in Figure~\ref{fig:cycle}, the cycle involving \texttt{G}, \texttt{I}, and \texttt{E} induces a dependency of \texttt{G} on \texttt{I} through \texttt{E}. This dependency can lead to incorrect row propagation. When \texttt{I} is selected as the target node, it is initially associated with all eight rows of the truth table in Table~\ref{tab:cycle}. According to the propagation rules, only rows with \(\texttt{G} = 0\) should be propagated to node \texttt{H}. However, because \texttt{G} and \texttt{I} are part of a cycle together with \texttt{E}, some rows with \(\texttt{G} = 1\) are nevertheless associated with \texttt{H}. In particular, \(w_1\) and \(w_5\) should be accounted for at \texttt{H} despite \(\texttt{G} = 1\), since in these configurations the value of \texttt{G} is influenced by \texttt{I}, which in turn is influenced by \texttt{H}.

To address this, one approach is to explicitly simulate nodes involved in cycles, as done for diamond structures; however, in complex networks this may require simulating the entire network, reducing computational gains. Hence, when encountering cycles, we consider two strategies: (i) marking nodes belonging to cycles as visited after the first propagation step and stop the propagation there, or (ii) allowing propagation to continue until convergence (or until a visitation threshold is reached). For the second strategy, propagated rows within cycles can either be merged or intersected, resulting in three approaches in total. All approaches were evaluated experimentally as shown in Figures~S4 and S5, Supplementary Material (SM). The convergence-based approach with merged the sets of rows (Method~2) performs slightly better and is therefore adopted.

\begin{figure}[ht]
\centering
\begin{minipage}{0.32\textwidth}
   \centering
    \includegraphics[width=0.6\linewidth]{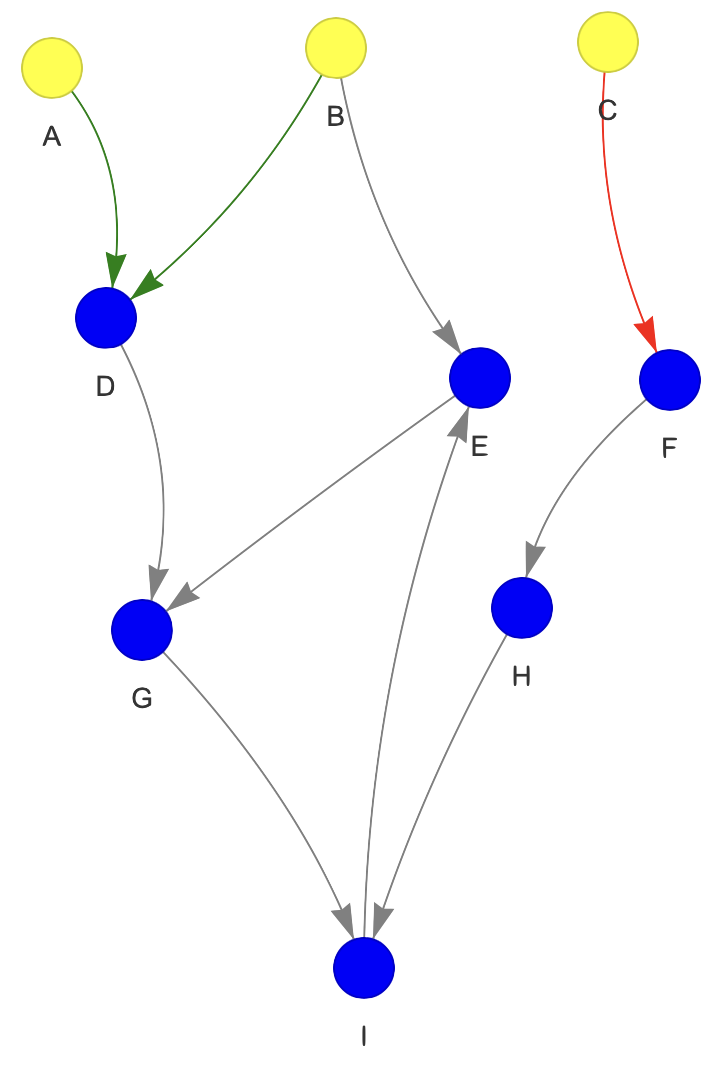}
    \end{minipage}
    \begin{minipage}{0.15\textwidth}
    \small
        \texttt{D = A AND B} \\
         \texttt{E = B OR I} \\
         \texttt{F = NOT C}\\
          \texttt{G = D OR E} \\
           \texttt{H = F} \\ 
            \texttt{I = G OR H }
    \end{minipage}
 \caption{A BBN network with cycle and corresponding Boolean functions}
  \label{fig:cycle}
\end{figure}
 
\begin{table}[]
    \centering
    \begin{tabular}{|c||ccc|ccccc|c|}
        \hline
        Weight & A & B & C & D & E & F & G & H & I \\
        \hline
         $w_1 $ & 0 & 0 & 0 & 0 & 1 & 1 & 1 & 1 & 1 \\
         $w_2 $ & 0 & 0 & 1 & 0 & 0 & 0 & 0 & 0 & 0 \\
         $w_3 $ & 0 & 1 & 0 & 0 & 1 & 1 & 1 & 1 & 1 \\
         $w_4 $ & 0 & 1 & 1 & 0 & 1 & 0 & 1 & 0 & 1 \\
         $w_5 $ & 1 & 0 & 0 & 0 & 1 & 1 & 1 & 1 & 1 \\
         $w_6 $ & 1 & 0 & 1 & 0 & 0 & 0 & 0 & 0 & 0 \\
         $w_7 $ & 1 & 1 & 0 & 1 & 1 & 1 & 1 & 1 & 1 \\
         $w_8 $ & 1 & 1 & 1 & 1 & 1 & 0 & 1 & 0 & 1 \\
        \hline
    \end{tabular}
    \caption{Truth table of the network in Figure \ref{fig:cycle}.}
    \label{tab:cycle} 
\end{table}

\paragraph*{Complexity analysis}
The propagation method avoids repeatedly simulating the perturbed network for each internal node, reducing the complexity from ($O(n^2 2^n)$) to ($O(n2^n)$). Details of the complexity analysis are provided in Section S5 of the SM.

\subsection{Correctness in acyclic networks}
\label{sec:proof}
Since all approximations arise from cycles in the BBN, the propagation method is exact for acyclic networks. This is validated empirically on synthetic acyclic networks (see SM, Figure~S9). We now formally establish the correctness of the propagation method in the acyclic case. The complete proof is provided in the SM; here, we present a sketch of the argument.

\begin{theorem}
\label{thm:acyclic_correctness}
Let $\mathbb{B}_{n,m}$ be an acyclic binarized Boolean network and let $\texttt{T}$ be a target node.
For any node $\texttt{X}$, the knock-out and knock-in Shapley values computed by the proposed propagation method are equal to those obtained by explicit simulation. 
\end{theorem}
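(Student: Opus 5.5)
The plan is to reduce the Shapley-value claim to a per-row statement. Both Equation~\ref{eq:shapko} and Equation~\ref{eq:shapki} are weighted sums over input configurations $S$. The propagation method also produces a weighted sum over rows, with signs fixed by the value of $\texttt{T}$ in that row. It therefore suffices to prove the following for every node $\texttt{X}$ and every row $S$: \emph{$S$ lies in the row set $R(\texttt{X})$ propagated to $\texttt{X}$ if and only if flipping $\texttt{X}$ in row $S$ (fixing it to the opposite of its attractor value) changes the value of $\texttt{T}$.}

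Given this, the knock-out/knock-in split follows directly. If $\mathbf{a}^S(\texttt{X})=1$, fixing $\texttt{X}$ to $1$ changes nothing, and fixing it to $0$ is a flip; the case $\mathbf{a}^S(\texttt{X})=0$ is symmetric. The sign of the contribution is determined by $\mathbf{a}^S(\texttt{T})$, matching the rule stated earlier. I will also use that in an acyclic network with inputs fixed, synchronous simulation from the all-zero state reaches a unique fixed point after at most depth-many steps. Hence each row is a singleton attractor, given by evaluating the functions in topological order. A knock-out or knock-in of $\texttt{X}$ is then just re-evaluation in topological order with $\texttt{X}$ overridden, which changes only descendants of $\texttt{X}$.

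I will prove the per-row statement by induction on the reverse topological order, i.e.\ on the distance from $\texttt{X}$ to $\texttt{T}$ in the DAG. The base case $\texttt{X}=\texttt{T}$ is trivial, since all rows are assigned to $\texttt{T}$. Nodes that are not ancestors of $\texttt{T}$ receive no rows and cannot influence $\texttt{T}$. For the inductive step, first take a node $\texttt{X}$ with a single child $\texttt{A}$ on the way to $\texttt{T}$, i.e.\ $\texttt{X}$ is not the source of a diamond. Flipping $\texttt{X}$ changes $\texttt{T}$ iff flipping $\texttt{X}$ changes $\texttt{A}$ and the change in $\texttt{A}$ changes $\texttt{T}$. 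Acyclicity is what makes this decomposition valid: the other operand of $\texttt{A}$ is not a descendant of $\texttt{X}$, so it keeps its row value after the perturbation. The OR, AND, identity and NOT rules are exactly the Boolean derivative $\partial f_{\texttt{A}}/\partial \texttt{X}$ evaluated at the unperturbed value of the sibling. The second factor holds iff $S\in R(\texttt{A})$ by the induction hypothesis. For NOT, only the value of the node changes, so the membership statement is unaffected and only the KO/KI roles swap.

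The main obstacle is nodes with several children on paths to $\texttt{T}$. Here the sibling of one path may itself be a descendant of $\texttt{X}$, so independence fails and rows do not simply union. I would handle this by defining, for such an $\texttt{X}$, the reconvergence node $\texttt{G}$: the first node that dominates all paths from $\texttt{X}$ to $\texttt{T}$ in the DAG. Flipping $\texttt{X}$ changes $\texttt{T}$ iff it changes $\texttt{G}$ and a change in $\texttt{G}$ propagates to $\texttt{T}$. The first condition is decided exactly by the local partial simulation of the diamond, because that simulation performs the true topological re-evaluation of all nodes between $\texttt{X}$ and $\texttt{G}$. The second condition is $S\in R(\texttt{G})$ by induction, and this step needs $\texttt{G}$ to dominate every path. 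The delicate point is to check that the implementation's diamond detection always selects such a dominating convergence node in an acyclic BBN, including nested or overlapping diamonds. If the implementation picks the nearest convergence point, I would argue by iterating the same reasoning from that point, and would prove a lemma that composing dominators preserves exactness. Summing the per-row equivalence with the weights $w_S$ then yields the theorem.
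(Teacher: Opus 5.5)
Your proposal is correct and follows essentially the paper's own argument. The paper also inducts backward from \texttt{T} along a topological order, handles the single-successor case with the local Boolean rules (its Lemma~5), and handles diamond sources by partial simulation; your per-row reformulation and explicit post-dominance requirement on the reconvergence node spell out what the sketch leaves implicit.

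One caution: your fallback of iterating from the ``nearest convergence point'' is valid only if that point post-dominates \texttt{X}. For example, with \texttt{A = X}, \texttt{B = X}, \texttt{C = NOT X}, \texttt{G1 = A AND B}, \texttt{T = G1 OR C}, stopping at \texttt{G1} wrongly gives \texttt{X} the rows where \texttt{X} is 1, although \texttt{T} is constant. This is why the paper simulates up to the deepest (``master'') diamond node.
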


\begin{proof}
Restrict to the backward reachable subgraph $\mathbb{B}^T$ from target node \texttt{T}, since nodes outside it have zero Shapley value. Because the graph is acyclic, $\mathbb{B}^\texttt{T}$ admits a topological order toward $\texttt{T}$, and $\texttt{T}$ can be written as a Boolean formula obtained by composing node functions along this order. Each propagated row corresponds to a coalition.

At $\texttt{T}$, propagation enumerates the truth table and counts exactly the coalitions that flip the output under knock-in/knock-out, matching the definitions. Assume correctness for all nodes closer to $\texttt{T}$ than $\texttt{Y}$. For node $\texttt{Y}$, there are two cases:
(i) If $\texttt{Y}$ starts a diamond, partial simulation ensures each coalition affecting $\texttt{T}$ is counted for $\texttt{Y}$, so contributions match explicit simulation.
(ii) Otherwise, all paths through $\texttt{Y}$ go through a unique successor $\texttt{X}$; changing $\texttt{Y}$ affects $\texttt{T}$ iff it affects $\texttt{X}$. Since $\texttt{X}$ is correct and the propagation computes Boolean updates correctly (Lemma 5, SM), $\texttt{Y}$ is also correct. By induction, the propagation matches explicit simulation on all nodes in $\mathbb{B}^\texttt{T}$, completing the proof.
\end{proof}


\subsection{Implementation details}
\label{sec:implement}
The propagation rules are defined for BBNs. We therefore first transform a general Boolean network into an equivalent BBN in which (i) each node has at most two inputs, and (ii) any node with a negatively influential input has exactly one incoming edge. This transformation preserves the dynamic behavior of the original network.

To identify diamond structures, the network must be acyclic. We therefore remove a feedback arc set (FAS) from the original network, as it is smaller than the binarized one, and then binarize the resulting acyclic network to obtain an approximate acyclic BBN. Diamond structures are detected on this acyclic BBN. We detect the FAS in the original network using Algorithm~S1 (SM) of Eades, Lin, and Smyth~\cite{eades1993fast}.



\paragraph{``Binarization''}
We binarize the network by introducing intermediate nodes so that each node has at most two incoming edges while preserving functionality. For a node \(\texttt{A}\) with operands \(\texttt{I}_1, \texttt{I}_2, \texttt{I}_3\), an intermediate node \(\texttt{I}_{23}\) aggregates \(\texttt{I}_2\) and \(\texttt{I}_3\) is introduced. \(\texttt{A}\) then receives inputs from \(\texttt{I}_1\) and \(\texttt{I}_{23}\). Boolean functions are updated accordingly, e.g., \(\texttt{I}_{23} = \texttt{I}_2 \texttt{ OR } \texttt{I}_3\) and \(\texttt{A} = \texttt{I}_1 \texttt{ OR } \texttt{I}_{23}\). To preserve synchronous dynamics, all intermediate nodes are updated before the original nodes.

\paragraph{Identification of diamond structures}
We detect diamonds by traversing the BBN backward from a designated output node. Each node propagates a carry-on map indicating whether upstream nodes are reached via the left or right branch of binary operators. When multiple paths converge, conflicts in the carry-on states indicate potential diamond closures; nodes with multiple outgoing edges are recorded as sinks. Unary operators propagate the carry-on information unchanged. The full procedure is given in SM Algorithm S2. Among the detected structures, we select the deepest (most downstream) node as the ``master diamond'' and discard nested smaller diamonds, since simulating the master diamonds is sufficient to determine the corresponding set of rows for the initial node.

\paragraph{Propagation in the BBNs}
The resulted BBN is still acyclic as the FAS had been removed before the binarization. We reintroduce the removed arcs back to the BBN before performing the propagation. During the propagation, the rules in Section~\ref{sec:rules} are applied as presented in Algorithm~S3 (see SM). Besides the BBN, set of master diamonds, target node, and Boolean formulas, the algorithm takes two indices derived from the truth table: an index \(I\), mapping each node to rows where its value is $1$, and a complementary index \(\bar I\), mapping to rows where its value is $0$. These inverted indices enable efficient row queries.

Starting from the target node \({\texttt{T}}\), the algorithm propagates layer by layer, computing for each node \(\texttt{X}\) the set of truth-table rows \(R(\texttt{X})\) determining its Shapley values. 

\subsection{Further improvements}
Random simulations can improve accuracy in cyclic networks by explicitly simulating a subset of nodes during propagation instead of applying the rules in Section~\ref{sec:rules}. Experiments (five runs per model-target pair; Figures S1–S2) show that both runtime and the fraction of correctly propagated rows increase roughly linearly with the proportion of simulated nodes, with runtime growing faster. This reduces efficiency but is beneficial when higher accuracy is required.

Figure S3 in the SM shows that the fraction of correctly propagated rows at input nodes correlates with that across all nodes. Since exact rows at input nodes can be obtained from intact network simulations, they provide a reference to estimate propagation error.

\section{Evaluation of the propagation method}
\label{sec:eval}
We evaluated the proposed propagation method using a benchmark of \textcolor{black}{20} models (18 cyclic and 2 acyclic) obtained from the Cell Collective database\footnote{\url{https://cellcollective.org}}. Models were randomly selected under the constraint that they contain between two and fifteen input nodes. Models exceeding this range were excluded due to their higher computational demands for the baseline simulations, as the experiments involve numerous repeated runs. For each model, three target nodes were chosen according to the model descriptions (e.g., their biological functions) if available; otherwise randomly. This resulted in 60 data points of (model, target) pairs. 

\subsection{Accuracy}
We assess propagation performance using relative RMSE (RMSE normalized by the maximum value) and the normalized discounted cumulative gain (NDCG) over the full ranking length as primary metrics, both ranging from 0 to 1 with 1 is optimal value. The average results of the benchmark is show in Table \ref{tab:average}. Kendall’s \(\tau_b\) and Spearman’s \(\rho\) are additionally reported as auxiliary ranking-based measures (Figure S6, SM). Except for relative RMSE, these metrics evaluate ranking quality rather than exact values, as relative gene importance is more informative than absolute scores. Finally, the fraction of correctly propagated rows (from 0 to 1) is reported in Figure~S7 (SM) to directly measure the propagation accuracy.

\begin{table}[]
    \centering
    \begin{tabular}{cc|cc|c}
        \multicolumn{2}{c|}{RMSE} & \multicolumn{2}{c|}{NDCG} & Ratio \\
        KO & KI & KO & KI &  \\
        \hline
        0.0195 & 0.0288 & 0.865 & 0.779 & 11.28
    \end{tabular}
    \caption{Average values of the metrics for KO and KI, and runtime ratio between the simulation-based analysis and the propagation method.}
    \label{tab:average}
\end{table}

\begin{figure*}[t]
    \centering
    \includegraphics[width=\textwidth]{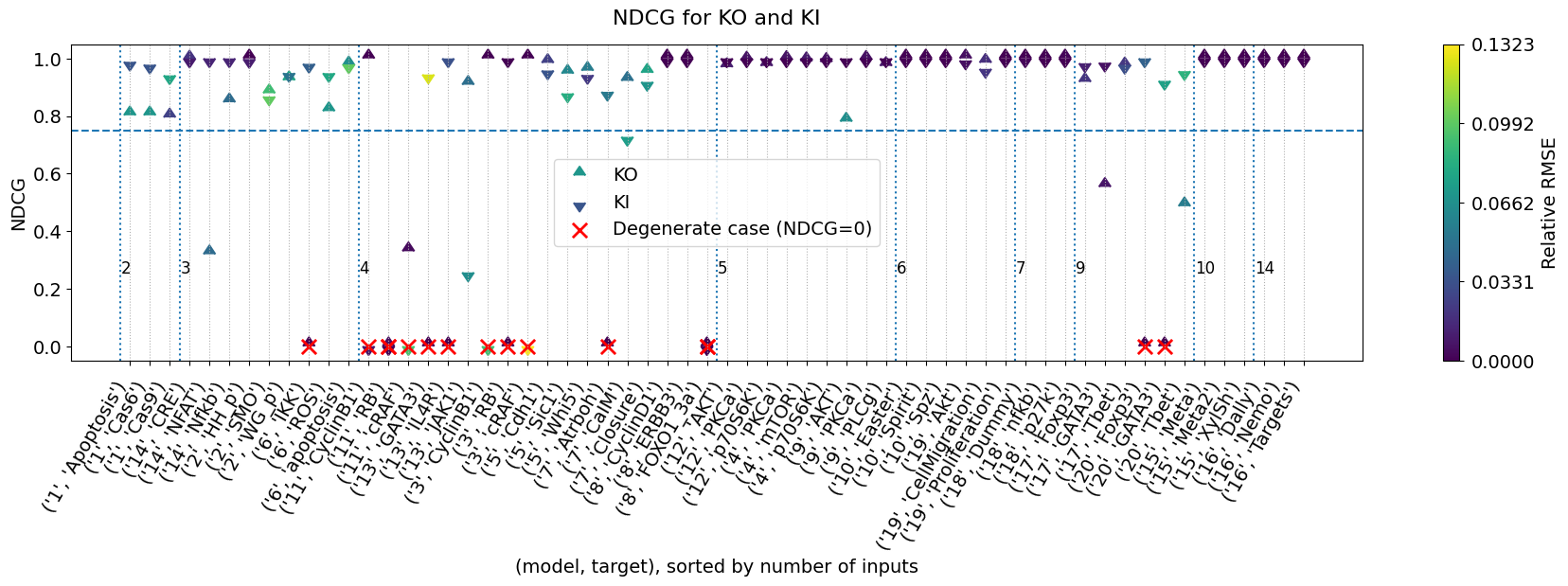}
    \caption{NDCG of KO and KI values and relative RMSE under the Shapley weighting scheme. Models are grouped by input size (within each dashed vertical segment, labeled by the number of inputs). The horizontal line indicates the NDCG threshold of 0.75, while degenerate cases, where all nodes share the same ranking, are marked with red crosses.}
    \label{fig:ndcg}
\end{figure*}

Figure~\ref{fig:ndcg} presents the NDCG results for the Knock-out (KO) and Knock-in (KI) Shapley values under the Shapley weighting scheme; uniform weighting (Figure~S8, SM) shows similar trends. The average relative RMSE is 0.0194 for KO and 0.0288 for KI. The average NDCG is 0.779 for KO and 0.865 for KI (Table \ref{tab:average}).

Using an NDCG threshold of 0.75 (blue dashed line), 13 KO and 8 KI samples fall below this value. Most of these correspond to degenerate baseline cases in which all nodes have identical scores, resulting in identical ranks and an NDCG of zero (cases below the red-crossed line). Excluding degenerate cases, most model-target pairs achieve NDCG well above 0.75, often close to 1, indicating good ranking recovery. KI is more stable, with tightly concentrated NDCG values and low relative RMSE, whereas KO shows larger variability and more frequent degeneracy. Performance exhibits a weak dependence on the number of network inputs.

\subsection{Time performance}
\begin{figure*}[t]
    \centering
    \includegraphics[width=\textwidth]{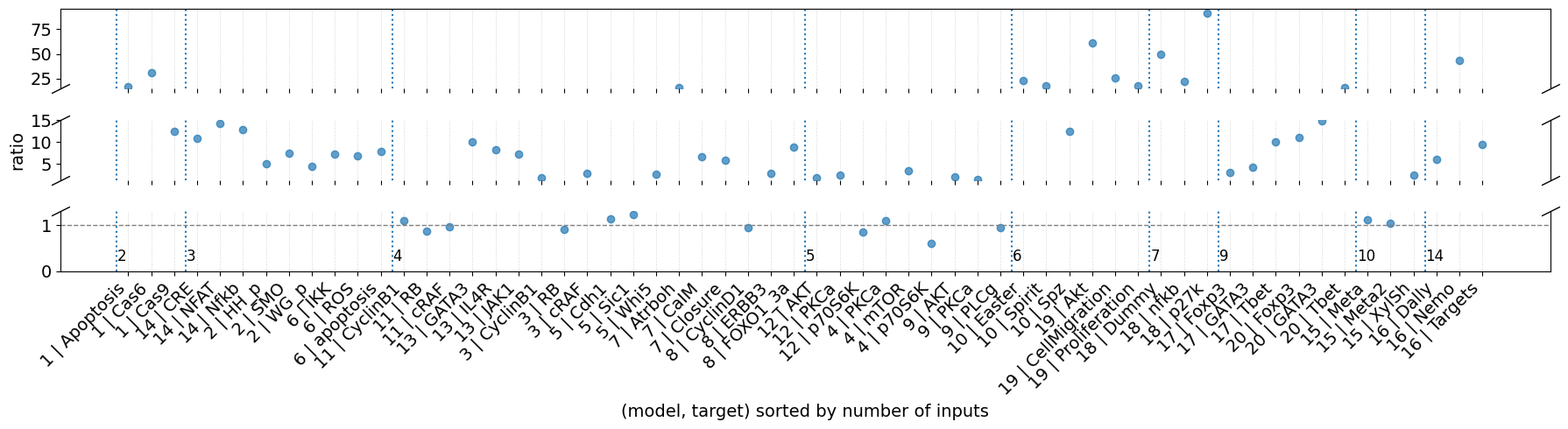}
    \caption{Ratio between the runtime of the standard analysis and the propagation method. Models are grouped by input size (within each dashed vertical line, labeled by the number of inputs).}
    \label{fig:time_shap}
\end{figure*}

Figure~\ref{fig:time_shap} shows the ratio between the runtime of the standard analysis and the propagation-based method with the average of 11.28 (Table \ref{tab:average}), with values greater than one indicating a speed-up. The plot is split into three panels to highlight the low-ratio range (0–1.3) and the middle range (1.3-15) where many points fall into; the horizontal line at 1 marks equal runtime. 

Overall, propagation is faster in most cases, with speed-ups increasing with the number of inputs. Small models show only modest gains, while larger and more complex networks often achieve speed-ups of one order, and in some cases nearly two orders of magnitude, reflecting the good scalability of the propagation approach. Only a few small models exhibit ratios close to or below one. Specifically, 7 (model, target) pairs have ratios below 1, all in models with four or five inputs; the minimum ratio ($\approx 0.6$) occurs for target \texttt{p70S6K} in model~4, which completes under one second. In larger models, only targets \texttt{Meta} and \texttt{Meta2} in model~15 show limited gains, due to relatively smaller number of internal nodes in comparison with the total number of nodes in the network. Even in this case, propagation remains faster, with ratios of 1.115 and 1.04, respectively.

\subsection{Comparison with other notions} 
Here we compare our results with existing methods such as DP \cite{Heckel2013},  \textit{Strength} \cite{Matache2016} and DI \cite{ghanbarnejad2012impact} when available. It is important to note that such notions produce a single generic gene ranking per model, whereas our method yields target-specific ranking.

\paragraph{Fibroblast signaling pathway \cite{helikar2008emergent,puniya2016systems}}
Helikar \textit{et al.} proposed a large-scale Boolean model of signal transduction in a generic cell type with 139 nodes, in which 9 nodes are input nodes\footnote{\url{https://research.cellcollective.org/dashboard#module/1557:1/}} \cite{helikar2008emergent}. The model integrates a large number of highly connected pathways including receptor tyrosine kinase
G protein-coupled receptors and the integrin signaling pathways. This model was extensively analyzed by Puniya \textit{et al.} through 1300 simulations to identify the most influential components based on their impact on the rest of the system \cite{puniya2016systems}. 

When DP and \textit{Strength} are applied to this pathway, DP ranks \texttt{EGFR}, \texttt{ASK1}, \texttt{Src}, \texttt{PIP3\_345}, and \texttt{PKC} as the top nodes \cite{Matache2016,pentzien2018identification}, while \emph{Strength} identifies \texttt{Src}, \texttt{PIP3\_345}, \texttt{PKC}, \texttt{PIP2\_45}, and \texttt{EGFR} \cite{Matache2016}. The DI ranks \texttt{Src}, \texttt{B-Arrestin}, \texttt{GRK}, \texttt{PIP2\_45}, and \texttt{PKC} among the top five nodes (in that order) \cite{ghanbarnejad2012impact}. These results are all consistent with the findings of Puniya~\cite{puniya2016systems}. The approximation of the DI based on the principal eigenvector of the activity matrix also identifies four over the top five nodes, with the exception of \texttt{GRK} \cite{ghanbarnejad2012impact}. 

Unlike these generic ranking methods, the proposed approach captures target-specific influence, enabling a more precise identification of regulatory roles. Considering cellular growth, \texttt{Akt} and \texttt{Erk} (MAPK1) are used as targets due to their established role as growth markers~\cite{puniya2016systems}. The analysis identifies \texttt{PIP3\_345} (Phosphatidylinositol (3,4,5)-triphosphate) and \texttt{PIP2\_45} (Phosphatidylinositol-4,5-biphosphate) as key regulators, together with nodes such as \texttt{ILK}, \texttt{PI5K} (Phosphatidylinositol 4-phosphate 5-kinase), \texttt{PA} (Phosphatidic acid), \texttt{Rho}, \texttt{RhoK} (GRK1), \texttt{ARF} (CDKN2A), and \texttt{PLD} (PLD1) for \texttt{Akt}, and \texttt{IL1\_TNF}, \texttt{Mek} (MAP2K1), \texttt{PP2A}, \texttt{Tpl2}, \texttt{Trafs} (heterogeneous groups stimulated by IL1\_TNFR), \texttt{IL1\_TNFR}, and \texttt{Raf} (ZHX2) for \texttt{Erk} (MAPK1). For \texttt{Akt}, the prominence of \texttt{PIP3\_345} and \texttt{PIP2\_45} is biologically coherent, since \texttt{PI3K} converts \texttt{PIP2} into \texttt{PIP3} and \texttt{PIP3} serves as the membrane-localized signal that recruits and activates \texttt{Akt}; in other words, the phosphoinositide layer is expected to dominate any target-specific ranking centered on \texttt{Akt} output. 
The additional appearance of \texttt{ILK, PI5K, PLD, PA, Rho, RhoK}, and \texttt{ARF} suggests that the method is capturing a broader integrin/membrane-lipid/cytoskeletal control layer around \texttt{Akt}, not just the canonical receptor-to-PI3K step; this interpretation is consistent with the known role of \texttt{ILK} in anchorage-dependent growth and Akt-related signaling \cite{gorska2022integrin}, and with the fact that \texttt{PLD} generates phosphatidic acid while Rho-family, ARF-family and GTPases regulate PLD-associated lipid signaling~\cite{bruntz2014phospholipase}.
Thus, several of these components are consistent with previously reported key regulators of growth dynamics~\cite{puniya2016systems}. Moreover, this target-oriented analysis provides a more detailed and nuanced profile of node importance compared to the generic score.

\paragraph{T-cell Receptor signaling network  \cite{Rodriguez2007tcell}} 
This Boolean network describes T cell activation mechanism via the T cell receptor, the CD4/CD8 co-receptors, and the accessory signaling receptor CD28 with 94 nodes and 123 interactions\footnote{\url{https://research.cellcollective.org/dashboard#module/2171:1/}}. For this model, DP ranks \texttt{pkb} (AKT1), \texttt{lat}, \texttt{fyn}, \texttt{zap70}, and \texttt{lckp1} (CD4-dependent pool of lck) as the top five influential nodes \cite{pentzien2018identification}. These are membrane-proximal control nodes and are therefore expected to dominate generic rankings. The target-specific Shapley analysis answers a narrower question: which perturbations most affect a single downstream output, here \texttt{jnk} (c-Jun N-terminal kinases, principally MAPK8 and MAP9). Taking \texttt{jnk} (principally MAPK8 and MAP9) as the target (which was extensively analyzed in \cite{Rodriguez2007tcell}), \texttt{cd28} shows a weak influence on \texttt{jnk} (KO: $0.4002$; KI: $-0.5335$). In this context, the weak but non-zero influence of \texttt{cd28} is biologically coherent. The original logical model predicted that \texttt{cd28} stimulation alone selectively activates \texttt{jnk}; importantly, this prediction was then validated experimentally by the observation of evident and sustained \texttt{jnk} phosphorylation after \texttt{cd28} stimulation~\cite{weiss2000regulation}. By contrast, \texttt{pi3k} exhibits no effect (KO = 0; KI = 0). This is consistent with the model prediction and \textit{in vitro} experiments reported in \cite{Rodriguez2007tcell}, which indicates that \texttt{cd28} stimulation induces activation of \texttt{jnk}, while \texttt{jnk} activation does not depend on \texttt{pi3k} (PIK3CD) within the \texttt{cd28}-driven signaling context. This should be interpreted in a branch-specific rather than universal sense. Rodriguez et al. showed that under \texttt{cd28} stimulation, \texttt{pi3k} inhibition blocked \texttt{pkb} phosphorylation but not \texttt{jnk} phosphorylation, which is consistent with the present target-specific result. However, the same study also reported that under \texttt{tcr} stimulation, \texttt{jnk} activation was blocked by \texttt{pi3k} inhibition, and therefore concluded that the model likely lacked a direct or indirect between \texttt{pi3k} and \texttt{jnk} in that context.
The Shapley analysis does not suggest that \texttt{pi3k} is irrelevant to T-cell signaling in general; rather, it suggests that \texttt{pi3k} is dispensable for the \texttt{cd28} to \texttt{jnk} branch captured here, even though \texttt{pi3k} remains central to other T-cell outputs and may still contribute to \texttt{jnk} regulation in other stimulation contexts.
Overall, this case study illustrates the main conceptual advantage of the proposed framework: global influence measures emphasize the general control architecture of the T-cell receptor network, whereas the target-specific Shapley analysis resolves which branch is most relevant for a chosen phenotype. Here, it shows that although proximal hubs dominate generic rankings, for the specific output \texttt{jnk}, \texttt{cd28} exerts limited but genuine control, whereas  \texttt{pi3k} is not required in the CD28-centered branch represented by this model output~\cite{shah2021t}.


\section {Conclusion and future work}
\label{sec:conclusion}
This paper proposes a framework to compute the Knock-out and Knock-in Shapley values in BNs as a measurement for node importance. By exploiting the logical structure of binarized networks, the method avoids exhaustive simulations while preserving target-specific importance semantics. Experiments on benchmark models show that the approach recovers node rankings with good accuracy and achieves substantial speed-ups, particularly in networks with many inputs. Despite limitations in cyclic networks, the method provides a practical and scalable alternative to simulation-based Shapley analysis for BN models.


Several directions can be taken to strengthen the proposed framework. First, the handling of diamond structures can be further optimized. Instead of partially simulating entire diamond subgraphs, future work will focus on detecting the precise entry and exit points of signal interaction within diamonds and on characterizing their logical influence patterns analytically. By examining the conditions under which signals from multiple paths actually interact at the reconvergence node, it should be possible to reduce or eliminate local simulations, thereby improving runtime. Secondly, we will address cyclic networks by deriving explicit error bounds for propagation in the presence of feedback. By quantifying how cycles distort independence assumptions and by relating propagation errors to structural properties of the cycles (e.g., length, connectivity, or feedback strength), we aim to provide guarantees on the maximum deviation from exact Shapley values. Such bounds would clarify the trade-off between efficiency and accuracy and guide the use of partial simulation strategies in cyclic networks.





\ifCLASSOPTIONcaptionsoff
  \newpage
\fi



\bibliographystyle{ieeetr}
\bibliography{bibtex/bib/IEEEexample}
%


%


\begin{IEEEbiographynophoto}{Giang Pham}
Giang Pham received her M.Sc. degree in Computer Science from University of Pisa, Italy. She is currently pursuing a Ph.D. degree in Computer Science at the University of Pisa, Italy. Her work focuses on investigating the theoretical foundations and developing practical frameworks for analyzing biological networks.
\end{IEEEbiographynophoto}


\begin{IEEEbiographynophoto}{Silvia Giulia Galfrè}
Silvia Giulia Galfrè is a bioinformatician and researcher specializing in computational biology and single-cell RNA sequencing data analysis. She earned her PhD in Cellular and Molecular Biology with a focus on bioinformatics from the University of Rome Tor Vergata. Her work centers on developing computational methods and tools, including the COTAN package, to investigate gene expression and regulatory mechanisms. Currently, she is involved in biomedical data analysis projects and interdisciplinary collaborations, combining expertise in biology, statistics, and machine learning.
\end{IEEEbiographynophoto}

\begin{IEEEbiographynophoto}{Paolo Milazzo}
Paolo Milazzo is Associate Professor at the Department of Computer Science of the University of Pisa. His research primarily focuses on computational modeling and network analysis techniques for the study of complex systems, with emphasis on Systems Biology applications. He has authored more than 130 peer-reviewed publications in international journals and conference proceedings and serves as a program committee member for various conferences in the fields of theoretical computer science and computational biology.
\end{IEEEbiographynophoto}


\vfill



\end{document}